\newtheorem{definition}{Definition}
\newtheorem{example}{Example}
\newtheorem{theorem}{Theorem}
\newtheorem{remark}{Remark}
\newtheorem{problem}{Problem}
\renewenvironment{proof}{\textit{Proof:}}{\qedsymbol}
\renewcommand{\qedsymbol}{\hfill$\blacksquare$}
\title{\LARGE \bf
Robust Recovery and Control of Cyber-physical Discrete Event Systems\\ under Actuator Attacks*}
\author{Samuel Oliveira$^{1,2}$, Mostafa Tavakkoli Anbarani$^{3}$, Gregory Beal$^{3}$,  Ilya Kovalenko$^{3}$, Marcelo Teixeira$^{4}$,\\ André B. Leal$^{2}$ and Rômulo Meira-Góes$^{5}$
\thanks{This work was supported in part by the Coordenação de Aperfeiçoamento de Pessoal de Nível Superior - Brasil (CAPES) – Finance Code 001 and grant CAPES/AUXPE 2910/2023 (Processo nº 88881.919326/2023-01-PDPG-CONSOLIDACAO-3-4) and by the Fundação de Amparo à Pesquisa e Inovação do Estado de SC, Brazil - FAPESC (2023TR000924).}
\thanks{$^{1}$Department of Exact and Technological Sciences, Federal University of Amapá (UNIFAP), Macapá, AP, Brazil. (e-mail: samuel.oliveira@unifap.br)}%
\thanks{$^{2}$Graduate Program in Electrical Engineering, Santa Catarina State University (UDESC), Joinville, SC, Brazil. (e-mail: andre.leal@udesc.br)}%
\thanks{$^{3}$Department of Mechanical Engineering, The Pennsylvania State University (PSU), University Park, PA, USA. (e-mails: mkt5457@psu.edu, glb5367@psu.edu, iqk5135@psu.edu)}
\thanks{$^{4}$Graduate Program in Electrical Engineering, Federal University of Technology (UTFPR), Pato Branco, PR, Brazil. (e-mail: mtex@utfpr.edu.br)}
\thanks{$^{5}$School of Electrical Engineering and Computer Science, The Pennsylvania State University (PSU), University Park, PA, USA. (e-mail: romulo@psu.edu)}%
}
\begin{document}

\maketitle
\begin{tikzpicture}[remember picture,overlay]
\node[anchor=south,yshift=8pt] at (current page.south) {
    \parbox{\textwidth}{
        \footnotesize
        \textbf{IEEE Copyright Notice}\par
        \vspace{0.3em}
        \textcopyright~2025 IEEE. Personal use of this material is permitted.
        Permission from IEEE must be obtained for all other uses, in any current or
        future media, including reprinting/republishing this material for advertising
        or promotional purposes, creating new collective works, for resale or redistribution
        to servers or lists, or reuse of any copyrighted component of this work in other works.
    }
};
\end{tikzpicture}

\thispagestyle{empty}
\pagestyle{empty}

\begin{abstract}

Critical real-world applications strongly rely on Cyber-physical systems (CPS), but their dependence on communication networks introduces significant security risks, as attackers can exploit vulnerabilities to compromise their integrity and availability.
This work explores the topic of cybersecurity in the context of CPS modeled as discrete event systems (DES), focusing on recovery strategies following the detection of cyberattacks. 
Specifically, we address actuator enablement attacks and propose a method that preserves the system's full valid behavior under normal conditions. Upon detecting an attack, our proposed solution aims to guide the system toward a restricted yet robust behavior, ensuring operational continuity and resilience. Additionally, we introduce a property termed AE-robust recoverability, which characterizes the necessary and sufficient conditions for recovering a system from attacks while preventing further vulnerabilities. Finally, we showcase the proposed solution through a case study based on a manufacturing system.
\end{abstract}

\section{Introduction}

Communication networks are an essential part of cyber-physical systems (CPS), enabling the exchange of information between physical components and their controller. 
As a result, these systems are vulnerable to the disclosure of private and confidential information and to malicious attacks that can disrupt normal operations. 
Such attacks can cause the system to deviate from its normal or intended behavior, potentially causing severe damage to its infrastructure and equipment, leading to high operational costs. 
Therefore, information confidentiality, service availability, and system integrity are important requirements of CPS. 

This work addresses these cybersecurity challenges within the framework of discrete event systems (DES) \cite{Lafortune:2021}. 
In DES, the system (\emph{plant}) is controlled by a supervisor that observes events generated by the plant to then disable events to satisfy a control specification. 
In the cybersecurity context, an attacker might compromise the network communication between the plant and the supervisor.
In this work, we focus on developing a method to ensure \emph{recoverability} and operational continuity in cyber-physical discrete event systems (CPDES) as a mitigation strategy against attacks.

A relevant number of papers focused on addressing cybersecurity in DES, following specific strategies, such as \textit{attack modeling} \cite{meira-goes:2020synthesis, Lin:2020}, \textit{intrusion detection and mitigation} \cite{Carvalho:2018, Wang:2020, Lima:2021, Fahim:2024,fahim2025enhancing, OLIVEIRA2024198}, \textit{synthesis of resilient supervisors} \cite{Su:2018, Ma:2022, meira-goes:2023dealing}, and \textit{system recovery} \cite{Pena:2022, cavalcanti:2025, Sakata:2023, Sakata:2024}. 
For a complete review of cybersecurity in the context of DES, we refer the reader to \cite{hadjicostis2022cybersecurity, OLIVEIRA2023100907}. 

While papers addressing attack modeling take the perspective of the attacker to formulate successful attacks, works on intrusion detection and mitigation focus on identifying such attacks. In addition, intrusion detection and mitigation methods aim to prevent the system from reaching unsafe states by disabling certain events \cite{Carvalho:2018, Lima:2021}. 
Although the integrity of the system is maintained by following this strategy, its continuity can be affected. 
Methods for synthesizing resilient supervisors, on the other hand, usually restrict the closed-loop behavior of the system to ensure safety \cite{Ma:2022}. 
These restrictions are enforced even in the absence of attacks.

System recovery methods aim to restore the operation of the system after an attack has occurred and been detected.
The work in \cite{Pena:2022} presents a secure recovery procedure based on synchronizing automata.
This procedure resynchronizes the supervisor and plant estimates after desynchronization due to attacks by forcing the system to return to its initial state.
While this approach provides recovery and operational continuity after attacks, the attacker's capabilities remain unchanged, i.e., the system continues to be vulnerable to new attacks.
In contrast, the work in \cite{cavalcanti:2025} explores the Bipartite Transition System (BTS) to find control strategies that recover the system from attack detection back to nominal behavior while avoiding unsafe states.
The authors assume the attacker is isolated immediately upon detection, meaning no further attacks can occur.
Our approach differs by not assuming immediate attacker isolation and focusing on a recovery method that operates under continued attacker presence.

Research on \emph{fallback control systems} explores a related idea. 
In \cite{Sakata:2023, Sakata:2024}, continued operation after attacks is achieved by switching to a fallback controller that is connected to the network but is assumed to be immune to attacks. 
This approach preserves nominal behavior and maintains availability, but relies on the existence of such an attack-immune fallback controller. 
In contrast, our approach does not require this assumption. 
Instead, we address the problem within the standard supervisory control theory framework, making our approach more general and applicable. 

The method proposed in this paper differs from the aforementioned works in that 
(i) in the absence of attacks, it allows the system to execute nominal behavior, (ii) upon attack detection, a recovery strategy steers the system to a subset of secure states, and (iii) a post-attack robust control keeps the system within secure states. 
Figure~\ref{fig:illustration} illustrates these three concepts.
After nominal behavior $s$, the attacker allows event $\sigma$ to disrupt the nominal operation.
Upon attack detection, our \emph{robust-recovery strategy} achieves three objectives: (i) prevents the system from reaching unsafe states, (ii) recovers the system from this undesired behavior to normal and secure operation (sequence of events $r$), and (iii) enforces a robust controller against attacks (shaded zone within controlled behavior). 
Our approach ensures operational continuity while preserving both integrity and availability, which are critical aspects of cybersecurity.

\begin{figure}[t!]
\vspace{.25cm}
    \centering
    \includegraphics[width=0.9\linewidth, page=2]{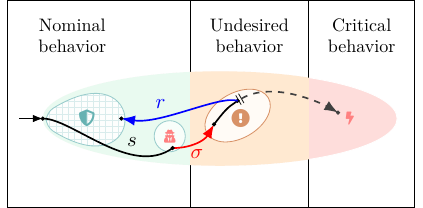}
    \caption{Robust-recovery strategy}
     \vspace{-10pt}
    \label{fig:illustration}
\end{figure}

In this paper, we focus on a specific type of attack known as \emph{actuator enablement attacks} (AE-attacks for short). 
In such attacks, some controllable events disabled by the supervisor can be re-enabled by an attacker, leading to unexpected execution of undesired (and possibly damaging) behaviors in the plant. 
Upon attack detection, we propose the synthesis of robust-recovery strategies that lead the system to a new post-attack behavior, ensuring it remains within a subset of states where AE-attacks are no longer feasible. 

The remaining sections of this work are organized as follows: Section~\ref{sec:pre} presents the basic concepts of automata and languages applied to supervisory control of DES under AE-attacks. Section~\ref{sec:problem} presents the problem of synthesizing robust-recovery strategies for CPDES. Section~\ref{sec:solution} presents our proposed approach. 
In Section~\ref{sec:case study}, the proposed approach is applied to a case study based on a manufacturing system. 
Finally, in Section~\ref{sec:conclusions}, the conclusions and future works are presented.


\section{Preliminaries}\label{sec:pre}
\subsection{Supervisory control of DES}
We consider a DES modeled as a deterministic finite-state automaton (DFA) $G = (X, \Sigma, f, \Gamma, x_0, X_m)$, where $X$ denotes the finite set of states; $\Sigma$ is the set of events; $f : X \times \Sigma \rightarrow X$ denotes the possibly partial transition function; $\Gamma : X \rightarrow 2^{\Sigma}$ is the active event function, as $\Gamma(x)$ represents all events $\sigma$ for which $f(x,\sigma)$ is defined; $x_0$ is the initial state, and $X_m$ is the set of marked states.
Given $G_1$ and $G_2$ as two deterministic automata, the synchronous composition (which combines two or more automata into a single one) between $G_1$ and $G_2$ is denoted as $G_1 \| G_2$ \cite{Lafortune:2021}.

A finite-length sequence of events in $\Sigma$ is named a string, and the set of all strings formed by events in $\Sigma$ is denoted by $\Sigma^*$, including the empty string $\varepsilon$ (string with no events). Any subset of $\Sigma^*$ is called a language over $\Sigma$.  For a language $\mathcal{L}$, the notation $\overline{\mathcal{L}}$, the so-called prefix closure of $\mathcal{L}$, represents the set of all the preﬁxes of all the strings in $\mathcal{L}$. 
Notation $\mathcal{L}/s$ represent the post-language of $\mathcal{L}$ after the sequence $s$, i.e., $\mathcal{L}/s := \{t \in \Sigma^* : st \in \mathcal{L}\}$.

Automaton $G$ models two languages: the language generated by $G$ and the language marked by $G$. 
The language generated by $G$ is defined as $\mathcal{L}(G) := \{s \in \Sigma^*  |  f(x_0,s)$ is defined$\}$. The language marked by $G$ is defined as $\mathcal{L}_m(G) := \{s \in \mathcal{L}(G)  |  f(x_0,s) \in X_m \}$. In other words, $\mathcal{L}_m(G)$ is the subset of $\mathcal{L}(G)$ that contains all strings whose transition functions lead to a final state, representing the behavior of $G$ in which tasks are completed. 


Given two automata $G_1$ and $G_2$, $G_2$ is a \textit{strict subautomaton} of $G_1$, denoted $G_2\sqsubset G_1$, if $G_2$ is a copy of $G_1$ minus a set of states $Q$, i.e., $X_{G_2} = X_{G_1}\setminus Q$.
Formally, $G_2\sqsubset G_1$ if (i) $G_2$ is a subgraph of $G_1$, and (ii)  for every $x,y\in X_{G_2}$ such that $f_{G_1}(x,s)=y$ for some $s\in \mathcal{L}(G_1)$ then $f_{G_2}(x,s)=y$.
As a consequence, $G_2\sqsubset G_1$ implies $\mathcal{L}(G_2)\subset \mathcal{L}(G_1)$, see, e.g., \cite{Cho:1989,Lafortune:2021} for more details.  

In supervisory control of DES~\cite{Ramadge:1987}, $\Sigma$ is partitioned as $\Sigma_c \mathbin{\dot{\cup}} \Sigma_{uc}$, denoting the sets of controllable and uncontrollable events, respectively. Given an uncontrolled plant $G$, a supervisor $S$ observes the events generated by the plant and disables events in $\Sigma_c$ to ensure a control specification. 

The specification to be achieved in $G$ under control of $S$ is represented by a language $K \subseteq \mathcal{L}(G)$, which specifies the desired behavior for $G$. 
In this work, we assume that $K$ expresses a state specification ensuring that the plant avoids a subset of unsafe states $X_{US} \subseteq X$, where damage occurs. 
Specifically, $K$ is defined as $K = \{s \in \mathcal{L}(G) \mid f(x_0, s) \notin X_{US} \}$.
Based on this, a closed-loop system $S/G$ is obtained with language $\mathcal{L}(S/G) \subseteq \mathcal{L}(G)$, see e.g. \cite{Lafortune:2021,Wonham:2018}. 
The supervisor $S$ is said to be \textit{safe} if $\mathcal{L}(S/G) \subseteq \overline{K}$.

$K$ is said to be controllable with respect to $G$ if $\overline{K} \Sigma_{uc} \cap \mathcal{L}(G) \subseteq \overline{K}$, meaning it remains unchanged under uncontrollable events. 
If $K$ is not controllable, then a controllable and nonblocking sublanguage of $K$ must be computed. 
Hereinafter, we use $S$ interchangeably as the supervisor and its automaton representation.
Since $K$ specifies behaviors of $G$ that avoid unsafe states, the supervisor $S$ that enforces this specification is represented by a strict subautomaton of $G$, i.e., $S \sqsubset G$.

\subsection{CPDES under AE-attacks}

We consider a CPDES in which some of the system components may be connected to the supervisor through a vulnerable communication network. 
In this sense, we consider an attacker who has the ability to modify a subset of controllable events $\Sigma_{c,v} \subseteq \Sigma_c$, representing vulnerable actuators from the plant $G$. 
Specifically, we consider \emph{actuator enablement attacks}, where an attacker enables an event that was originally disabled by the supervisor to prevent the system from reaching an unsafe state $x \in X_{US}$. By overriding the supervisor's disablement command, the system is then allowed to reach unsafe states.  


In \cite{Carvalho:2018}, a method for detecting AE-attacks is presented, along with a property called Actuator Enablement Safe Controllability (AE-safe controllability for short). 
This property is satisfied if it is possible to detect any occurrence of attacks and then disable a non-vulnerable controllable event before the plant reaches an unsafe state, as shown in the following example.

\begin{example}\label{ex:gs-models}
    Consider the plant $G$ shown in Figure~\ref{fig:example_plant}, where $\Sigma_{uc} = \{\sigma_2, \sigma_4, \sigma_5, \sigma_7, \sigma_{10}\}$ and $\Sigma_c = \{\sigma_1, \sigma_3, \sigma_6, \sigma_9, \sigma_{11}\}$, the set of unsafe states $X_{US} = \{10, 11\}$, the set of vulnerable controllable events $\Sigma_{c,v} = \{\sigma_6\}$. 
    The supervisor $S$ then disables event $\sigma_6$ at states 6 and 4.

    \begin{figure}[t]
        \centering
        \hspace{.22cm}
        \includegraphics[width=1\linewidth, page=5]{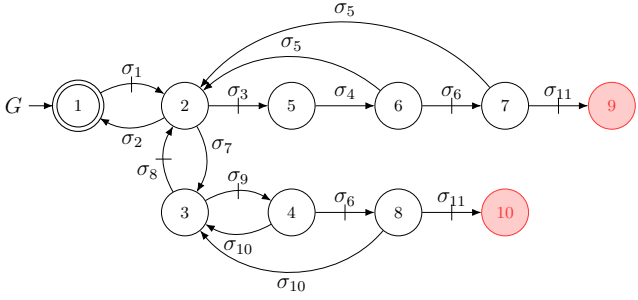}
        \caption{Plant $G$ for Example~\ref{ex:gs-models}.}
        \label{fig:example_plant}
    \end{figure}

    Even if an attacker enables event $\sigma_6$, the event $\sigma_{11} \in \Sigma_c \setminus \Sigma_{c,v}$ can be disabled to prevent the system from reaching unsafe states. 
    Thus, this system is said to be \emph{AE-safe controllable}.
\end{example}

Hereinafter, we assume: (i) the supervisor $S$ is \emph{safe}, i.e., it prevents the system from reaching unsafe states $x \in X_{US}$, and (ii) the closed-loop system $S/G$ is AE-safe controllable.

To model the system under attack, two new models are constructed: the attacked plant model $G^a$ and the attacked supervisor model $S^a$, both derived from the original models $G$ and $S$, respectively.
Let $\Sigma_{c,v}^a := \{\sigma^a : \sigma \in \Sigma_{c,v}\}$ be the set of vulnerable events that are improperly enabled by an attacker. 
Then $\sigma^a$ represents the occurrence of $\sigma$ in the plant when it is originally disabled by the supervisor, but enabled by an attacker. 

In the attacked plant model $G^a$, we add parallel transitions labeled with $\sigma^a$ for each original transition labeled with $\sigma \in \Sigma_{c,v}$, expressing the attacker’s ability to enable these events.
In the attacked supervisor model $S^a$, we add a new state labeled $A$ with a self-loop with all events $\Sigma \cup \Sigma_{c,v}^a$. 
Additionally, we add transitions in every state for each event $\sigma^a \in \Sigma^a_{c,v}$, except the ones where the original event $\sigma$ is feasible. 
These new transitions lead to the new state $A$.

State $A$ models the supervisor's actions after the attack.
For now, we assume that the supervisor allows every event.
Later on, we pose the robust-recovery problem to ensure a safe recovery, i.e., restrict the behavior post-attack.
The model of the closed-loop system under attack $S^a / G^a$ is obtained by computing $S^a \| G^a$.
Since every event is allowed after an attack, we obtain the complete behavior of $G$ after AE-attacks.


\begin{example} \label{ex:attacked-system}
Consider the plant $G$ and supervisor $S$ from Example~\ref{ex:gs-models}, with vulnerable events $\Sigma_{c,v} = \{\sigma_6\}$. The attacked supervisor $S^a$ and closed-loop system under attack $S^a/G^a$ are shown in Figure~\ref{fig:attacked_models}. 
Due to space limitation, $G^a$ is omitted.

\begin{figure}[t]
    \hspace{.22cm}
    \subfloat[$S^a$]{
        \label{fig:attacked_sup}
        \centering
        \includegraphics[width=.75\linewidth, page=8]{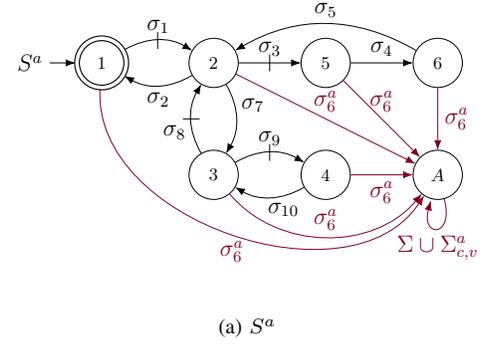}
    }
    \hfill
    \subfloat[$S^a/G^a$]{
        \label{fig:attacked-clsystem}
        \centering
        \includegraphics[width=1\linewidth, page=9]{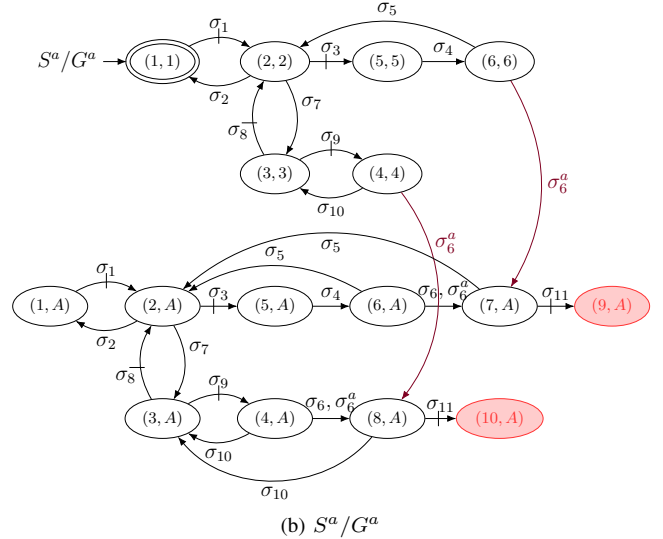}
    }
    \par 
    \caption{Attacked models (Example~\ref{ex:gs-models}).}
    \label{fig:attacked_models}
\end{figure}


\end{example}

As a mitigation strategy, \cite{Carvalho:2018} proposed disabling all controllable events of $G$ to prevent an unsafe behavior upon attack detection. 
Based on this, \cite{Wang:2020} extended the attack models to probabilistic discrete event systems (PDES) and proposed a mitigation strategy that minimizes the probability of unsafe behavior by enabling specific events after attacks, instead of disabling them all. 
Alternatively, \cite{Lima:2021} introduced a security supervisor that avoids unsafe states by disabling specific events only when there is a real risk of executing an unsafe behavior due to attacks. 

While these approaches maintain system integrity by preventing the system from reaching unsafe states, \emph{they cannot ensure operational continuity}.
Operational continuity is essential for maintaining the system's functionality after attacks.

\section{Robust-recovery of CPDES}\label{sec:problem}

In this section, we pose the problem of recovering a CPDES from AE-attacks and then enforcing a robust post-attack control strategy. 
In this scenario, the system is allowed to execute its full valid behavior in the absence of attacks. 

Once an attack is detected, the system should not only avoid unsafe states but also be able to transition to and remain in a robust region, which is composed of a given subset of states of $G$ where AE-attacks are ineffective.
This problem raises a fundamental question: \emph{Is it possible to transition the system from attack detection to such a region while avoiding further attacks?}

To address this question, we formalize: (i) the robust region that represents the robust behavior to be enforced after attacks, and (ii) the problem of synthesizing robust-recovery strategies for achieving this transition safely.

\subsection{Post-attack robust behavior}

The robust behavior is a subset of $\mathcal{L}(G)$ that guarantees system functionality while avoiding states that are vulnerable to AE-attacks. 
We start by defining the notion of vulnerable states.
\begin{definition}
    (Vulnerable states). Given a plant $G$ and the set of vulnerable controllable events $\Sigma_{c,v}$, the set of vulnerable states is defined as \begin{equation}
        X_v := \{x \in X \mid \Gamma(x) \cap \Sigma_{c,v} \neq \emptyset \}.
    \end{equation}
\end{definition}

Intuitively, the set of vulnerable states $X_v$ consists of states $x \in X$ where a vulnerable controllable event $\sigma \in \Sigma_{c,v}$ is feasible, making it susceptible to AE-attacks.

\begin{definition}\label{def:robust-beh}
    (Robust Behavior). Given a plant $G$ and the set of vulnerable states $X_v$, the robust behavior of $G$ is represented by a strict subautomaton $G_{robust} \sqsubset G$ such that  
    \begin{equation}
        X_{G_{robust}}  \cap (X_v \cup X_{US}) = \emptyset.
    \end{equation}
\end{definition}

Since $G_{robust}$ contains no vulnerable states $x \in X_v$, AE-attacks are not feasible within it. 
From here on, we refer to $G_{robust}$ as the \emph{robust region} of $G$.

\begin{remark}
    In this paper, we assume that $G_{robust}$ is given. Therefore, we do not impose additional constraints on its definition.
\end{remark}

\begin{example}
Consider the plant model $G$ presented in Figure~\ref{fig:example_plant} with $\Sigma_{c,v} = \{\sigma_6\}$. 
A robust region $G_{robust}$ with respect to $G$ and $\Sigma_{c,v}$ is formed based on states $\{1,2,3\}$. 
\end{example}

\begin{remark}
    Note that there is a trade-off between the number of vulnerable events in $G$ and the restrictiveness of $G_{robust}$. As the number of vulnerable events increases, the problem becomes more challenging, possibly requiring stricter constraints on the robust behavior.
\end{remark}

\subsection{Robust-recovery from attacks} 
We formalize \emph{robust-recovery} in such a way that, after an attack, the system transitions to $G_{robust}$ while preventing further interference from attackers.

To do so, it is necessary to determine whether a robust-recovery strategy exists for all possible AE-attacks with vulnerable actuators $\Sigma_{c,v}$. 
Before formalizing robust-recovery strategies, we introduce some additional definitions related to attack detection that are required for this formulation.


\begin{definition}
    (Detection language). Given the attacked closed-loop system $S^a/G^a$, the detection language is defined as
    \begin{equation}\begin{aligned}
            \mathcal{L}_{det} := \{ s\sigma \in \mathcal{L}(G) \mid & \; s \in \mathcal{L}(S/G), \; \sigma \in \Sigma_{c,v}, \\
            & \; s\sigma^a \in \mathcal{L}(S^a/G^a) \}.
    \end{aligned}\end{equation}
\end{definition}  

In words, the detection language $\mathcal{L}_{det}$ consists of all valid strings $s$ that are followed by the execution of an attacked event $\sigma \in \Sigma_{c,v}$ while being recognized by the closed-loop system under attack $S^a/G^a$. 
Based on $\mathcal{L}_{det}$ and the assumption that $S\sqsubset G$, we now define the set of states where attack detection is achieved.

\begin{definition}
    (Detection states). Given an attacked system $S^a/G^a$ and its detection language $\mathcal{L}_{det}$, the set of detection states is defined as
    \begin{equation}
        X_{det} := \{x \in X \mid \exists s\sigma \in \mathcal{L}_{det}, f(x_0, s\sigma) = x \}.
    \end{equation}
\end{definition}

The set of detection states is composed of states that are reached after the occurrence of AE-attacks. 
As we assume that these attacks can be observed by the attacked supervisor $S^a$, we also assume that once these states are reached, the attack is also detected. 
With the detection states defined, we now address the \emph{robust-recovery strategies}.

While there might exist multiple paths from a detection state to states in the robust region $G_{robust}$, some of these paths might reach vulnerable states. Such paths are undesirable when synthesizing recovery strategies, as they could expose the system to further attacks during the recovery process. To deal with such vulnerabilities, robust-recovery is formalized as follows.

\begin{definition}
    (Robust-recovery strategies). Given a detection state $x_d \in X_{det}$ and the robust region $G_{robust}$, a robust-recovery strategy is a method that transitions the system to $G_{robust}$ by a path $r$ from $x_d$ to any state $x \in X_{G_{robust}}$. This path must satisfy the following conditions:
    
    \begin{itemize}
        \setlength{\itemindent}{.3cm}
        \item[(1)] $f(x_d,r) \in X_{G_{robust}}\subset X_G$;

        \item[(2)] 
        $
        \!
        \begin{aligned}[t]
            \forall t \in \overline{r}, \; f(x_d, t) \notin X_v\cup X_{US};    
        \end{aligned}
        $ 
        
        \item[(3)] 
        $
        \!
        \begin{aligned}[t]
            \forall t \in \overline{r}, \; 
            &\nexists u \in \Sigma_{uc}^* \mid f(x_d,tu) \in X_v\cup X_{US};    
        \end{aligned}
        $ 

        \item[(4)]  $\forall t \in \overline{r}, \forall e \in \Sigma_{uc}$, if 
                    $f(x_d, te)! \; \land \; te \notin \overline{r}$ then from 
                    $f(x_d, te)$ a path $r'$ must satisfy conditions (1), (2), (3) and (4).
    \end{itemize}
\end{definition}


Intuitively, this definition ensures we have a recovery $r$ from a detection state $x_d \in X_{det}$.
Condition (1) ensures that  $r$ leads the system to a state in $G_{robust}$.
Next, condition (2) enforces that $r$ does not visit vulnerable and unsafe: $X_v\cup X_{US}$, avoiding any vulnerable and unsafe states.
Condition (3) ensures that no state along $r$ contains an uncontrollable sequence leading to vulnerable or unsafe states.
Lastly, condition (4) guarantees recursively that if an uncontrollable event occurs that diverts the system from path $r$, the resulting state must also have a path $r'$ to $G_{robust}$ satisfying all these same conditions (1)-(4). 


Based on the above, a detection state $x_d \in X_{det}$ is said to be \emph{AE-robust recoverable} if there exists a robust-recovery strategy leading from $x_d$ to any $x \in X_{G_{robust}}$. 
If all detection states are AE-robust recoverable, then the system $G$ is said to be AE-robust recoverable as defined in the following.

\begin{definition}\label{def:robust-recoverability}
    (AE-robust recoverable system) Given an attacked system $S^a/G^a$ and the set of detection states $X_{det}$, we say that the system $S^a/G^a$ is AE-robust recoverable if, for every detection state $x_d \in X_{det}$, there exists a robust-recovery strategy leading to $G_{robust}$.
\end{definition}

\begin{example}
    Consider the attacked system $S^a/G^a$ shown in Figure~\ref{fig:attacked-clsystem}. Observe that after the attacked event $\sigma_6^a \in \Sigma_{c,v}^a$, the system can always reach a state $x \in X_{G_{robust}}$ without reaching any vulnerable states. Specifically, this is possible from state 7 through the uncontrollable event $\sigma_5$ and from state 8 through the uncontrollable event $\sigma_{10}$. 
\end{example}


The concept of AE-robust recoverability inherently involves two fundamental problems: verification and synthesis. Verification involves determining whether a CPDES meets the criteria for AE-robust recoverability, while synthesis focuses on developing methods to synthesize robust-recovery strategies if they exist.

\begin{problem}[Verification of AE-robust recoverability]\label{problem:strat-existence}
    Given an attacked system $S^a/G^a$, the set of detection states $X_{det}$ and the robust region $G_{robust}$, determine whether $S^a/G^a$ is AE-robust recoverable.
\end{problem}

\begin{problem}[Synthesis of robust-recovery strategies]\label{problem:strat-synth}
    Given an AE-robust recoverable attacked system $S^a/G^a$ w.r.t.   robust region $G_{robust}$, synthesize a robust-recovery strategy for every detection state $x_d \in X_{det}$.
\end{problem}

To address these problems, we propose a solution that aims to synthesize a resilient supervisor that preserves nominal behavior under normal conditions and enforces robust-recovery strategies upon detecting attacks. 
We show that this approach solves both the verification and synthesis problems.

\section{Synthesizing Robust-Recovery Strategies}\label{sec:solution}

We formulate the problem of synthesizing robust-recovery strategies as a \emph{supervisory control problem}. 
In this sense, we construct a resilient control specification $H^a$ that captures both the pre- and post-attack behaviors, from which the resilient supervisor can be obtained.





\subsection{Resilient control specification $H^a$}


The specification $H^a$ is modeled as a state-based specification, meaning the plant is restricted to avoid a subset of undesirable states. 
Specifically, $H^a$ is derived from the attacked system model $S^a/G^a$, where the plant follows the nominal controlled behavior before attacks. 
However, after an attack, the system's behavior expands to the entire behavior of $G$. 

First, we define $G^R = S^a/G^a$ as our system of interest, i.e., $G^R$ is the closed-loop behavior with and without attacks.
Figure~\ref{fig:attacked-clsystem} depicts $G^R$ for our running example.
At this point, we aim to restrict $G^R$ in two ways after attacked events: (i) to avoid reaching vulnerable and unsafe states, and (ii) to ensure the system can reach a state within the robust region $G_{robust}$. 

In the case of (i), we create the specification $H^a$ from $G^R$.
Specification $H^a$ is a copy of $G^R$, but we remove all vulnerable and unsafe states from the attacked state space $(x,A)$ for $x\in X_v\cup X_{US}$.
Regarding (ii), we redefine the marked states in $G^R$ to be states in $G_{robust}$ after attacks.
Formally, $X_{m}^R$ marks all states $(x,A)$ for $x \in X_{G_{robust}}$.
This new marking specifies that states in the robust region must be reachable after attacks.



\subsection{Resilient Supervisor $S^R$}  
In this work, we use the term \emph{resilient supervisor} to refer to a supervisor that achieves the nominal behavior in the absence of attacks and allows the system to reach any state within $G_{robust}$ after any possible AE-attack. We formally define it as follows:  

\begin{definition}\label{def:res-sup}
    (Resilient supervisor). Given an attacked system $G^R$ and a robust region $G_{robust}$, a supervisor $S^R$ is said to be resilient w.r.t. $\Sigma_{c,v}$ and $G_{robust}$ if after any AE-attack, it enforces a robust-recovery strategy.
\end{definition}

The resilient supervisor $S^R$ can be obtained by solving the standard supervisory control problem (SSCP) under the Ramadge-Wonham paradigm \cite{Ramadge:1987}, formulated as follows:

\begin{problem}\label{problem:sup-synth}  
(SSCP). Given system $G^R$ and a resilient control specification $H^a$, find (if it exists) a non-blocking supervisor $S^R$ for $G^R$, such that
\begin{itemize}
    \setlength{\itemindent}{.3cm}
    \item [(1)] $S^R$ is safe, i.e., $\mathcal{L}(S^R/G^R) \subseteq \mathcal{L}(H^a)$;
    \item[(2)] $S^R$ is maximally-permissive.
\end{itemize}
\end{problem}  

Once a supervisor $S^R$ is obtained, it is necessary to determine whether it satisfies the resilience property.
Since $H^a$ only deletes states from $G^R$ ($H^a\sqsubset S^R$), we have that supervisor $S^R\sqsubset G^R$. 

\begin{theorem}\label{theorem:det-res-sup}
Supervisor $S^R$ is resilient if and only if all detection states $X_{det}$ are contained in $X_{S^R}$.
\end{theorem}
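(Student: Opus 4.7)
The plan is to prove the biconditional by separately analyzing each direction, leveraging the fact that $S^R$ solves the SSCP with specification $H^a$ (so $S^R$ is safe, non-blocking, controllable, and maximally permissive) together with the structural property $S^R \sqsubset G^R$. Throughout, the key observation is that the attacker events $\sigma^a \in \Sigma_{c,v}^a$ behave uncontrollably from the supervisor's perspective, so any transition from a pre-attack state into a detection state cannot be pruned by $S^R$.

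For the forward direction ($S^R$ resilient $\Rightarrow X_{det}\subseteq X_{S^R}$), I would argue by contradiction. Suppose some $x_d \in X_{det}$ is not in $X_{S^R}$. By construction of $\mathcal{L}_{det}$, the state $x_d$ is reached by a nominal string $s \in \mathcal{L}(S/G)$ followed by an attacker action $\sigma^a$. Since a resilient supervisor must retain the full nominal closed-loop behavior, the pre-attack state $f(x_0,s)$ lies in $X_{S^R}$. Because $\sigma^a$ is uncontrollable, $S^R$ cannot block the transition into $x_d$, so $x_d$ must also lie in $X_{S^R}$, contradicting the hypothesis. Hence every detection state is contained in $X_{S^R}$.

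For the reverse direction ($X_{det}\subseteq X_{S^R} \Rightarrow S^R$ resilient), I would fix an arbitrary $x_d \in X_{det}$ and extract a robust-recovery strategy from the structure of $S^R$. Non-blockingness of $S^R$ together with the marking scheme of $H^a$ (where post-attack marked states correspond exactly to $X_{G_{robust}}$) yields a path $r$ from $x_d$ terminating in $G_{robust}$, giving condition (1). Safety of $S^R$ w.r.t.\ $H^a$, which excludes $X_v \cup X_{US}$ from the post-attack region, gives condition (2). Controllability of $S^R$ then rules out any uncontrollable continuation from a prefix of $r$ reaching $X_v \cup X_{US}$, yielding condition (3), and applying non-blockingness at each uncontrollable detour provides an alternative path $r'$ satisfying the same properties, yielding the recursive condition (4).

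The main obstacle will be the careful bookkeeping that links the abstract state-based conditions (1)--(4) on paths in $G$ with the properties guaranteed by the SSCP on $G^R$, whose state space includes the post-attack mode $A$. Specifically, I need to justify that a recovery path in $S^R$ projects faithfully to a path in $G$ satisfying \emph{Robust-recovery strategies}, and that the $\sqsubset$ relation together with $X_{det}\subseteq X_{S^R}$ is sufficient to inherit non-blockingness exactly where it is needed. Once this translation between the product-state space of $G^R$ and the underlying state space of $G$ is nailed down, the argument itself reduces to standard supervisory-control reasoning about controllable and non-blocking sublanguages.
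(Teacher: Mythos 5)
Your proposal is sound and your reverse direction ($X_{det}\subseteq X_{S^R}\Rightarrow$ resilient) is essentially the paper's argument: non-blockingness plus the post-attack marking of $X_{G_{robust}}$ gives a recovery path, safety w.r.t.\ $H^a$ gives avoidance of $X_v\cup X_{US}$, and controllability handles the uncontrollable continuations and detours. Your forward direction, however, takes a genuinely different route. The paper argues the contrapositive through the mechanics of synthesis: if some attack admits no robust-recovery strategy, then every candidate path from $x_d$ either hits $X_v\cup X_{US}$, has an uncontrollable continuation into those states, or detours uncontrollably into a dead end, so the synthesis trims the recovery paths backwards until $x_d$ itself is deleted. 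You instead argue directly: resilience entails preservation of the nominal closed-loop behavior, so every pre-attack state survives in $X_{S^R}$, and since $\sigma^a$ is uncontrollable, controllability of $S^R$ forces the detection states to survive as well. Your version is shorter and isolates the two hypotheses doing the real work, but note that both of them are only implicit in the paper: the formal Definition~\ref{def:res-sup} states only the post-attack requirement (nominal preservation appears only in the informal gloss preceding it), and the paper never declares $\sigma^a$ uncontrollable, although both its trimming argument and yours silently require it. The paper's trimming argument, by contrast, explains \emph{why} a non-resilient instance causes $x_d$ to disappear, which is the operationally useful fact for turning the theorem into a verification test. Your closing concern about translating between the product state space of $G^R$ and the state space of $G$ is legitimate and is glossed over by the paper as well; making that projection precise would strengthen either proof.
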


\begin{proof} [Sketch]
    We prove Theorem~\ref{theorem:det-res-sup} by contraposition. Suppose $S^R$ is not resilient, which means that there exists at least one possibility of AE-attacks after which $S^R$ does not enforce a robust-recovery strategy. 
    Consider a path $r$ from a detection state $x_d \in X_{det}$ to any state within $X_{G_{robust}}$. 
    For $S^R$ to fail to enforce a robust-recovery strategy from $x_d$, this path $r$ must not be contained within $S^R$. 
    This can occur for one of three reasons: (1) $r$ reaches vulnerable or unsafe states that have been removed from $H^a$; (2) Along path $r$, there exist uncontrollable events that lead to vulnerable or unsafe states also removed from $H^a$; or (3) Along path $r$, uncontrollable events lead to states that either have no path to $X_{G_{robust}}$, or any such path exhibits the characteristics described in reasons (1) and (2). 
    In any of these cases, the paths that would allow reaching the robust region must be disabled to ensure safety and non-blockingness (as shown in Problem~\ref{problem:sup-synth}) and maintain robustness against attacks. 
    This leads to the trimming of the recovery path from this point backwards and eventually deleting the detection state $x_d$. 
    Therefore, if $S^R$ is not resilient, then not all detection states $X_{det}$ are contained in $X_{S^R}$.
    
    The other direction follows a similar reasoning.
    Assume all detection states are contained in $X_{S^R}$.
    By the definition of $S^R$, from every detection state, there exists a safe path that reaches one of the marked states in $G^R$.
    In other words, a robust-recovery strategy leading the system to $G_{robust}$.
\end{proof}

We now show that the problem of determining AE-robust recoverability and synthesizing robust-recovery strategies, i.e., Problems~\ref{problem:strat-existence} and ~\ref{problem:strat-synth}, can be reduced to the standard supervisory control problem, i.e., Problem~\ref{problem:sup-synth}.

\begin{theorem}
There exists a resilient supervisor $S^R$ for $G^R$ if and only if $S^a/G^a$ is \emph{AE-robust recoverable} w.r.t. $G_{robust}$. 
\end{theorem}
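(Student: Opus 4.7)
The plan is to prove both directions by leveraging Theorem~\ref{theorem:det-res-sup}, which already reduces the resilience of $S^R$ to the inclusion $X_{det}\subseteq X_{S^R}$. The forward direction follows almost immediately from Definition~\ref{def:res-sup}: if a resilient supervisor $S^R$ exists, then after any AE-attack it enforces a robust-recovery strategy, so from every detection state $x_d \in X_{det}$ some path $r$ satisfying conditions (1)--(4) of the robust-recovery definition lies within $\mathcal{L}(S^R/G^R)\subseteq \mathcal{L}(G^R)$, witnessing that $S^a/G^a$ is AE-robust recoverable in the sense of Definition~\ref{def:robust-recoverability}.

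For the converse, I would suppose that $S^a/G^a$ is AE-robust recoverable and, for each $x_d \in X_{det}$, fix a robust-recovery path $r_{x_d}$ reaching $X_{G_{robust}}$. First, I would argue that the nominal behavior $\mathcal{L}(S/G)$, together with the union of these recovery paths augmented by every uncontrollable deviation admitted by condition (4), defines a sublanguage $K^R \subseteq \mathcal{L}(H^a)$. Next, I would verify that $K^R$ is both controllable and non-blocking with respect to $G^R$: controllability follows from condition (3), which prevents any uncontrollable continuation from escaping $K^R$ into $X_v\cup X_{US}$, combined with condition (4), which guarantees that any uncontrollable deviation from a chosen recovery path still admits a robust-recovery; non-blocking follows from condition (1), which forces every recovery path to terminate at a state of $X_{G_{robust}}$, i.e.\ a marked state of $G^R$. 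Hence $K^R$ is a feasible candidate for Problem~\ref{problem:sup-synth}, so the maximally permissive solution $S^R$ satisfies $\mathcal{L}(S^R/G^R)\supseteq \overline{K^R}$; in particular $X_{det}\subseteq X_{S^R}$, and Theorem~\ref{theorem:det-res-sup} then yields that $S^R$ is resilient.

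The hard part will be the converse, specifically showing that $K^R$ is controllable and non-blocking without inadvertently reintroducing paths that re-enter $X_v \cup X_{US}$ through uncontrollable detours. Conditions (3) and (4) are precisely designed to close this gap, but their recursive character requires care: a clean formalization will probably proceed either by induction on the length of a recovery path or, more elegantly, by showing that the set of states from which a robust-recovery exists contains $X_{det}$ and is closed under uncontrollable transitions. Once this closure property is established, the reduction to the standard supervisory control problem is routine and the theorem follows as a corollary of Theorem~\ref{theorem:det-res-sup}.
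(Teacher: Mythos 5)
Your proposal is correct and follows essentially the same route as the paper: the forward direction reads AE-robust recoverability directly off Definition~\ref{def:res-sup}, and the converse builds a supervisor from the recovery paths guaranteed by Definition~\ref{def:robust-recoverability}. In fact, your converse is more explicit than the paper's (which simply asserts that such a supervisor "can be obtained"), since you identify the candidate sublanguage $K^R$, argue its controllability and non-blocking from conditions (1), (3), (4), and close the argument via Theorem~\ref{theorem:det-res-sup}.
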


\begin{proof}
We start with the \emph{only if} part ($\Rightarrow$). Assuming that $S^R$ is resilient w.r.t. $G^R$ and $G_{robust}$, then it can always transition to $G_{robust}$ after attack detection. Thus, from any detection state $x_d \in X_{det}$, a valid recovery path to $G_{robust}$ exists while avoiding unsafe and vulnerable states. This satisfies AE-robust recoverability (Def.~\ref{def:robust-recoverability}).

We now show the \emph{if} part ($\Leftarrow$). Assuming that $G$ is AE-robust recoverable, every detection state $x_d \in X_{det}$ has a valid recovery strategy to $G_{robust}$. Given this property, a supervisor $S^R$ can be obtained such that it allows the system to follow these paths upon attack detection and achieve $G_{robust}$ afterward, fulfilling the definition of resilient supervisor (Def.~\ref{def:res-sup}).
\end{proof}

Consequently, a resilient supervisor $S^R$ provides a solution to both the verification and synthesis problems (Problems~\ref{problem:strat-existence} and~\ref{problem:strat-synth}). 
That is, the existence of a resilient supervisor $S^R$ implies that $S^a/G^a$ is AE-robust recoverable.
Conversely, if $S^a/G^a$ is AE-robust recoverable, then a resilient supervisor $S^R$ can be synthesized to enforce all robust-recovery strategies from detection states to $G_{robust}$.

\begin{example}
    Consider $G$ and $S$ as given in Example~\ref{ex:gs-models}. Based on these models, we construct the attacked system $G^R$, and then the resilient control specification $H^a$. Finally, the resilient supervisor $S^R$ for plant $G^R$ is shown in Figure~\ref{fig:resilient-supervisor}.
    \begin{figure}[]
    \hspace{.15cm}
    \centering
    \includegraphics[width=.9\linewidth, page=13]{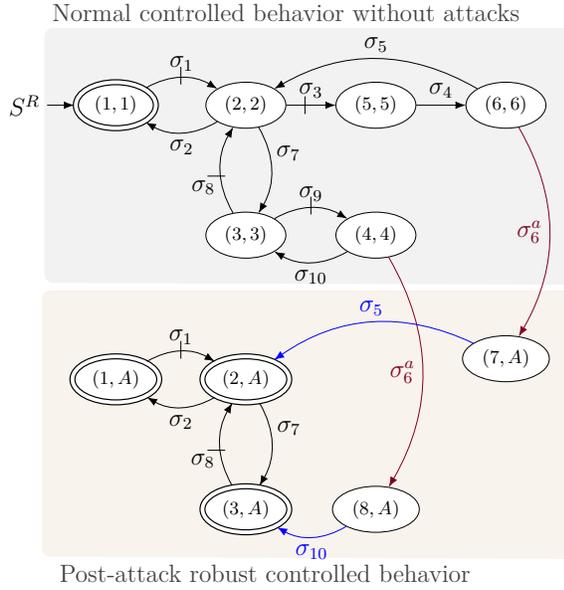}
    \caption{Resilient supervisor $S^R$.}
    \label{fig:resilient-supervisor}
\end{figure}
Note that $S^R$ contains the desired behavior for the system both before and after AE-attacks. In detail, the transitions highlighted in red refer to the attacked events while the transitions highlighted in blue refer to the recovery paths that lead to $G_{robust}$.
\end{example}

\section{Case Study: Manufacturing system}\label{sec:case study}
In this section, we demonstrate our robust-recovery strategy through a case study on a manufacturing testbed.

\subsection{Fischertechnik factory}
We use the Fischertechnik factory simulation 24v\textsuperscript{\texttrademark}, a down-scale manufacturing testbed shown in Fig.~\ref{fig:case-study-Fischertechnik}.  
The testbed consists of five stations: industrial oven, turntable, sorting station, conveyor belt, robotic gripper, and warehouse. 
In this case study, we modeled the behavior of the sorting station, conveyor belt, robotic gripper, and warehouse.

Two types of parts arrive on the conveyor belt: blue or red parts.
However, only one part is in the system at any given time.
Next, the conveyor moves the part to the sorting station.
The sorting station has three different locations: initial, location 1, and location 2.
In locations 1 and 2, pneumatic cylinders can push the parts out of the conveyor belt to buffers.
Lastly, the gripper picks parts for storage.
The gripper can pick parts from the initial location of the conveyor belt and the two buffers. 

\begin{figure*}[t]
    \subfloat[Fischertechnik manufacturing testbed]{
        \label{fig:case-study-Fischertechnik}
        \centering
        \includegraphics[width=.4\linewidth, valign=m]{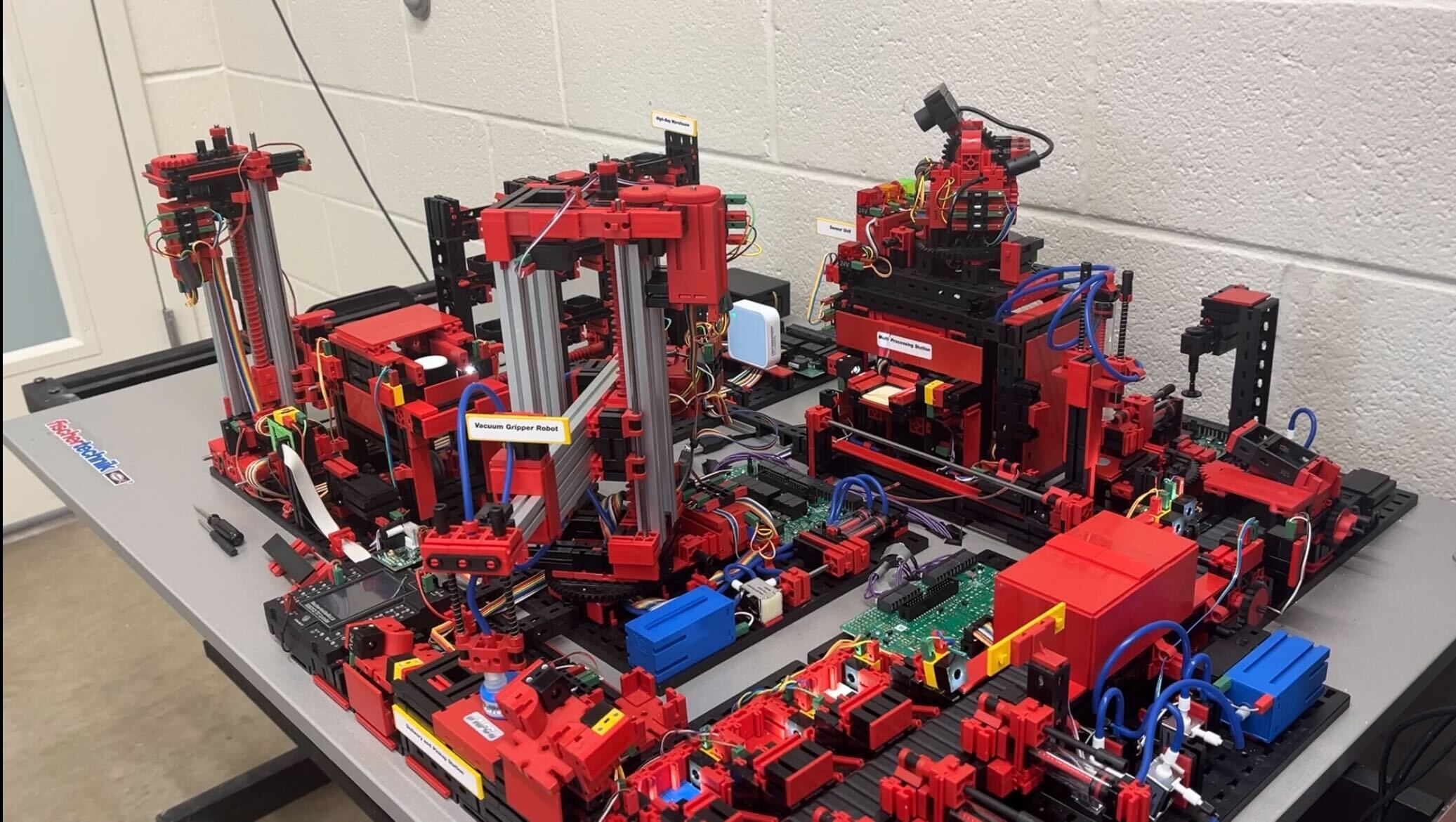}
        \vphantom{\includegraphics[width=.5\linewidth, page=14, valign=m]{images/cdc_recovery_models.pdf}}
    }
    \quad
    \subfloat[Plant model of manufacturing testbed]{
        \label{fig:case-study-mfg-plant}
        \centering
        \includegraphics[width=.5\linewidth, page=14, valign=m]{images/cdc_recovery_models.pdf}
    }
    \par 
    \caption{Fischertechnik manufacturing testbed}
    \label{fig:case-study}
\end{figure*}

Figure~\ref{fig:case-study-mfg-plant} depicts the model of this system, whereas Table~\ref{tab-case study} explains each event.
For simplicity, we model the gripper picking up a part and storing it with events $m$ (manual sort) and $s$ (automatic sort).
For instance, if the gripper picks a part in the initial location of the conveyor, the part will be manually sorted, event $m$.
When the gripper picks a red part in the buffer of location 1, the part is automatically stored sorted, event $s$.
If the gripper picks a blue part in the buffer of location 1, the part must be manually sorted before storing.
\begin{table}
\hspace{.2cm}
\caption{Event description manufacturing model}
\centering 
    \begin{tabular}[b]{l l}
         \multirow{2}{*}{Event}&
         \multirow{2}{*}{Description}\\
         &\\
         \toprule
         &\\[-1.5ex]
         $in_{I}$ & Conveyor moving the part to initial position\\
         $in_{1}$ & Conveyor moving the part to position $1$\\
         $in_{2}$ & Conveyor moving the part to position $2$\\
         $p_{1}$ & Pusher $1$ pushing and retracting\\
         $p_{2}$ &  Pusher $2$ pushing and retracting\\
         $s$ & Part sorted and stored\\
         $m$ & Part stored for manual sorting\\
         \bottomrule
         
\end{tabular}
\label{tab-case study}
\vspace{-10pt}
\end{table}

A supervisor is designed with the specification to automatically store the parts sorted by color.
The supervisor directs red parts to location 1, while blue parts go to location 2.
In Fig.~\ref{fig:case-study-mfg-plant}, states in red are unsafe states.
They represent that a part was wrongly stored as sorted, e.g., a red part picked in location 2 and stored as sorted.
States in yellow are also removed by the supervisor since they are undesired states, e.g., manually sorting parts.

\subsection{Actuator attacks}
We assume three actuator attack scenarios for the pneumatic cylinders: cylinder $p_1$ under attack $\Sigma_{c,v}^1 = \{p_1\}$, cylinder $p_2$ under attack  $\Sigma_{c,v}^2 = \{p_2\}$, and cylinders $p_1$ and $p_2$ under attacks  $\Sigma_{c,v}^3 = \{p_1,p_2\}$.
For each of these three scenarios, we construct the attacked plants $G^a_1$, $G^a_2$, and $G^a_3$ and the attacked supervisors $S^a_1$, $S^a_2$, and $S^a_3$.
We verify that in all cases, the controlled systems are AE-safe controllable.
Intuitively, the supervisor can disable event $s$ when a part goes to an incorrect position, e.g., when the attacker controller event $p_1$ pushes a blue part in the buffer of location 1. 
Next, we investigate robust-recovery strategies for these three scenarios.

\subsection{Robust-recovery strategies}
To analyze robust-recovery strategies in our case study, we first define the robust region $G_{robust}$. 
$G_{robust}$ is defined by states $A, R, RI, B$, and $BI$.
Upon attack detection, the controlled system should transition to manual sorting to maintain operational continuity.
Specifically, when a part gets in the initial position, only event $m$ should be allowed. 
By applying such a restriction within the system, further vulnerabilities are avoided.


In the first scenario ($\Sigma_{c,v}^1 = \{p_1\}$), attacks are only feasible when blue parts are being processed. Specifically, after a blue part reaches position 1 (event $in_1$), an attacker can enable event $p_1$ when it should be disabled. Following this attack, the system reaches the detection state $(BM, A)$, from which $G_{robust}$ can be reached through event $m$, allowing the part to be manually sorted.

If the attack on $p_1$ occurs after a part reaches position 2 (event $in_2$), the system reaches the detection state $(BC, A)$. 
From there, $G_{robust}$ can be reached through the sequence $p_2s$, enabling recovery.
The resilient supervisor $S^{R,1}$ is shown in Figure~\ref{fig:sr-scenario1}. 

\begin{figure}
    \centering
    \includegraphics[width=1\linewidth, page=15]{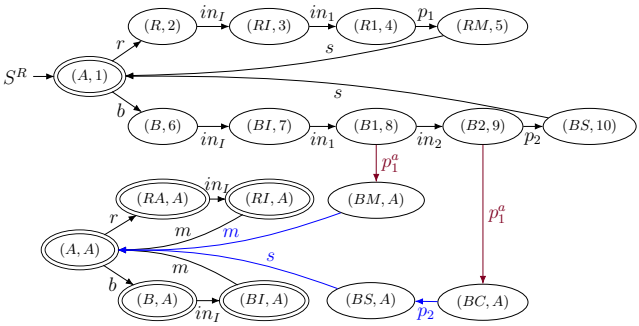}
    \caption{Resilient supervisor $S^R$ for the attack scenario where $\Sigma_{c,v}^1 = \{p_1\}.$}
    \label{fig:sr-scenario1}
    \vspace*{-2em}
\end{figure}

In the second scenario ($\Sigma_{c,v}^2 = \{p_2\}$), a similar recovery strategy exists. For instance, when $p_2$ is attacked after a red part reaches position 1, recovery is achieved through the sequence $p_1s$. When $p_2$ is attacked after a blue part reaches position 1, the system recovers through the sequence $p_1m$.

The third scenario ($\Sigma_{c,v}^3 = \{p_1,p_2\}$) presents a critical vulnerability. 
When both cylinders are vulnerable to attacks, the system $G^{R,3}$ is not AE-robust recoverable. 
For instance, when a red part reaches position 1 and event $p_2$ is attacked, the only potential recovery path to $G_{robust}$ requires using event $p_1$, which is also vulnerable to attacks. 
With both recovery paths compromised, $S^{R,3}$ cannot enforce a robust-recovery strategy upon attack detection. 


    



\section{CONCLUSIONS}\label{sec:conclusions}

In this paper, we have considered CPDES under AE-attacks. As a mitigation strategy, we formalized robust-recovery strategies that drive attacked systems to a robust region which, although more constrained than the nominal behavior, remains resilient to such attacks. We formalized the property of AE-robust recoverability and proposed a synthesis method based on the standard supervisory control paradigm, enabling the creation of supervisors that enforce robust-recovery in CPDES under AE-attacks. Relaxing the conditions for robust-recovery represents a potential extension of our work. Additionally, we plan to investigate different attack models, with particular attention to sensor attacks.





\bibliographystyle{IEEEtran}
\bibliography{root} 

\end{document}